\DeclareSymbolFont{calletters}{OMS}{cmsy}{m}{n}
\DeclareSymbolFontAlphabet{\mathcal}{calletters}
\newtheorem{Theorem}{Theorem}[part]
\newtheorem{Definition}{Definition}[part]
\newtheorem{Proposition}{Proposition}[part]
\newtheorem{Lemma}{Lemma}[part]
\newtheorem{Remark}{Remark}[part]
\newcommand{\nc}{\newcommand}
\nc{\esssup}{\mathop{\mathrm{ess\,sup}}}
\nc{\essinf}{\mathop{\mathrm{ess\,inf}}}
\nc{\argmax}{\mathop{\mathrm{arg\,max}}}
\def \P{\mathbb{P}}
\def \N{\mathbb{N}}
\def \R{\mathbb{R}}
\def \E{\mathbb{E}}
\def \Q{\mathbb{Q}}
\def \1{\mathds{1}}
\def \Ac{{\cal A}}
\def \Cc{{\cal C}}
\def \Ec{{\cal E}}
\def \Fc{{\cal F}}
\def \Pc{{\cal P}}
\def \l({{\left (}}
\def \r){{\right )}}
\def \l[{{\left [}}
\def \r]({{\right ]}}
\def\Acr{\mathscr{A}}
\def\Ccr{\mathscr{C}}
\def \dint{{\displaystyle \int}}
\DeclareMathOperator*{\argmin}{arg\,min}
\newcommand{\MBFigure}[6]{
$\left. \right.$ \\
\refstepcounter{figure}
\addcontentsline{lof}{figure}{\numberline{\thefigure}{\ignorespaces #5}}
\begin{center}
\begin{minipage}{#1cm}
\centerline{\includegraphics[width=#2cm,angle=#3]{#4}}
\begin{center}
\upshape{F\textsc{ig} \normal
\end{center}
size{\thefigure}. $-$} #5
\end{center}
\label{#6}
\end{minipage}
\end{center}
$\left. \right.$ \\}
\let\@fnsymbol\@arabic
\title{Optimal investment and consumption under logarithmic utility and uncertainty model }
\author{Wahid {\sc Faidi}
\thanks{Department of Mathematics, College of Science and Humanities, Shaqra University,
Al Quwayiyah, Saudi Arabia }\;
 \thanks{University of Tunis El Manar, LAMSIN, Tunis, Tunisia} 
\\ e-mail:  \textcolor[rgb]{0.00,0.07,1.00}{faidiwahid@su.edu.sa}- \textcolor[rgb]{0.00,0.07,1.00}{wahid.faidi@lamsin.rnu.tn}
}
\begin{document}
\maketitle
\begin{abstract}
We study a robust utility maximization problem in the case of an incomplete market and logarithmic utility with general stochastic constraints, not necessarily convex. Our problem is equivalent to maximizing of nonlinear expected logarithmic utility. We characterize the optimal solution using quadratic BSDE.
\end{abstract}
\textbf{Keywords}: Backward stochastic differential equations-  g-expectation  -  g-martingale  - Logarithmic utility   -  Robust Utility . 
\section{Introduction}
Utility maximization represents an  important problem in financial mathematics. This is an optimal investment problem faced by an economic agent having the possibility of investing in a financial market over a finite period of time with fixed investment horizon $T$. The goal of the agent is to find an optimal portfolio that allows him to maximize his "welfare" at time $T$. The founding work of Von Neumann and Morgenstern \cite{VM44}, made it possible to represent the preferences of the investor by means of a function of utility U and a given probability measure $\mathbb{P}$ reflecting his views as follows
$$E_{\mathbb{P}}[U(X_T)].$$
Where $X_T$ is the wealth of investor at time $T$. 
The investor's problem then consists in solving the optimization
$$\sup_{\pi}E_{\mathbb{P}}[U(X_T)].$$
To solve this type of problem there are essentially two approaches:the dual approach \cite{SB09} and the BSDE approach \cite{IH2005}
\\
In reality, several scenarios are plausible and it is impossible to precisely identify $\P$. Therefore, we must take into account this ambiguity on the model also known as Knightian uncertainty. Knightian uncertainty studies have undergone enormous developments in both theory and applications. The work of  Maccheroni, Marinacci and Rustichini \cite{MMR06} led to a new representation of preferences in the presence of model uncertainty, namely:
$$\inf_{Q \in \mathcal{Q}}E_{\mathbb{P}}[U(X_T)+\gamma(Q)],$$
where $\mathcal{Q}$ the set of plausible scenarios and $\gamma (Q)$ is the penalty term.In other words, the investor will decide in the worst case. Thus, it will solve the following optimization problem known as robust utility maximization problem:
$$\sup_{\pi}\inf_{Q \in \mathcal{Q}}E_{Q}[U(X_T)+\gamma(Q)].$$
Several developments have been implemented on this subject, whether using the duality method \cite{FG06,Que04,Sch07} or stochastic control techniques based on backward stochastic differential equations \cite{BMS05,FMM17}.
Using duality technical combined a PDE approach to the dual problem, Hernández and Schied \cite{HerSchied07} succeeded in characterizing the optimal value function using HJB equation.
In this paper  we studies the robust utility maximization problem in incomplete market setting. We characterize the value function of our optimization problem using the quadratic BSDE. This was worked out through the class of nonlinear expectation called $g^*$- expectation.
\\
This paper is organized as follows: Section 2 provides the problem setting, the necessary
notations, conceptions and some properties about the $g$-martingales.section 3 is devoted to the notion of $g-$ expectation.
In section 4, we specify the financial market and the set of admissible strategies. 
In section 5, we  characterize the optimal investment-consumption strategy by using backward stochastic differential equation.Finally we treat an example corresponding to the case of entropic penalty.
\section{Formulation Problem}
We consider a filtered space $(\Omega, \mathcal{F}, \mathbb{F}^W, \mathbb{P})$ over a finite horizon time $T$, where the filtration $\mathbb{F}^W=(\mathcal{F}^W_t)_{t \in [0,T]}$ is generated by standard $d$-dimensional  Brownian motion $W=(W^1,\ldots, W^d).$ 
In order to present our problem rigorously, we introduce some process spaces.
\begin{itemize}
\item[$\bullet$]For $(n,k)\in \N\times \N; \Pc^{n \times k}$ is the space of all predictable processes  with values in  $\R^{n\times k}$.
\\ $\Pc^{1 \times 1}$ will simply be noted $\Pc.$
\item[$\bullet$] For $p \in \mathbb{N}, \mathscr{H}_T^p\left(\mathbb{R}^m\right)$ is the set of all $\mathbb{R}^m$-valued stochastic processes $Z$ which are predictable with respect to $\mathbb{F}$ and satisfy $\E_P\left[\left(\dint_0^T\left|Z_t\right|^2 d t\right)^{\frac{p}{2}}\right]<\infty.$

\item[$\bullet$]
 $D_0^{\exp }$ is the space of all progressively measurable processes $y=$ $\left(y_t\right)_{0 \leq t \leq T}$ with
$$
\E_\P\left[\exp \left(\gamma \underset{0 \leq t \leq T}{\operatorname{ess\;\;sup}}\left|y_t\right|\right)\right]<\infty \quad \text { for all } \gamma>0.
$$
\item[$\bullet$]
$D_1^{\exp }$ denotes the space of all progressively measurable processes $y=\left(y_t\right)_{0 \leq t \leq T}$ such that,
$$
\E_\P\left[\exp \left(\gamma \int_0^T\left|y_s\right| \mathrm{d} s\right)\right]<\infty \quad \text { for all } \gamma>0.
$$
\end{itemize}
It is well known, according to the result of exponential martingale representation, that for every measure $Q\ll \P$ on $\mathcal{F}_T$ there is a predictable process $(\eta_t)_{t \in [0,T]}$
such that $\displaystyle E_\P[\int_0^T\parallel \eta_t\parallel^2 dt]<+\infty \;\; Q.a.s$
 and the density process of $Q$ with respect to $\mathbb{P}$ is an RCLL
martingale $Z^Q =(Z^Q_t)_{t \in [0,T]}$ given by:
$$
Z_{t}^{Q}=\mathcal{E}\left(\int_{0}^{t} \eta_{u} d W_{u}\right) Q . a . s, \forall t \in[0, T],
$$
where $\mathcal{E}(M)_{t}=\exp \left(M_{t}-\frac{1}{2}\langle M\rangle_{t}\right)$ denotes the stochastic exponential of a continuous local martingale $M .$ We introduce a consistent time penalty  by:
$$
\gamma_{t}(Q)=E_{Q}\left[\int_{t}^{T} h\left(\eta_{s}\right) ds \mid \mathcal{F}_{t}\right],
$$
where $h: \mathbb{R}^{d} \rightarrow[0,+\infty]$ is a proper,  convex and lower semi-continuous function such that $h(0) \equiv 0 .$ We also assume that there are two positive constants $\kappa_{1}$ and $\kappa_{2}$ satisfying :
$$
h(x) \geq \kappa_{1}\|x\|^{2}-\kappa_{2}.
$$
Our optimization problem  is written as follows:
$$\sup_{(\pi,c) \in\mathscr{A}_{e} }   \inf_{Q^\eta \in \mathcal{Q}}\E\Big[\bar{\alpha}U(X_T^{x,\pi,c})+\alpha\int_0^T e^{-\int_0^s \delta_udu} u(c_s)ds +\int_{0}^{T}e^{-\int_0^s \delta_udu} h\left(\eta_{s}\right) ds\Big],$$
where $\mathcal{Q}$ the space of all probability measures $Q^\eta $ on $(\Omega, \mathcal{F})$ such that $Q^\eta \ll \mathbb{P}$ on $\mathcal{F}_T$ and $\gamma_0(Q^\eta)< +\infty$. $U$ and $u$ are the utility functions and $\mathscr{A}_{e}$ is  the set of admissible strategies which will be specified later.
Our problem is then made up of two optimization subproblems. The first, the infmum, is studied by Faidi Matoussi and Mnif \cite{FMM17}. They have proven under the exponential integrability condition of the random variables $U(X_T^{x,\pi,c})$ and $\dint_0^Tu(c_s)ds$ that the infimum is reached in a one unique probability measure $\Q^{*}$ which is equivalent  to $\P$ and they characterized the value process of the dynamical optimization problem using the quadratic BSDE. More precisely, under the following assumptions:
\begin{itemize}
\item[\textbf{(H1)}] The processes $\delta$ is uniformly bounded,
\item[\textbf{(H2)}] $\E_{\mathbb{P}}[\exp{\gamma |U(X_T^{x,\pi,c})|}]< +\infty \;\; \text{and} \;\; \E_{\mathbb{P}}[\exp{\gamma \int_0^T|u(c_s)|ds}]< +\infty \;\; \forall \gamma > 0,$
\end{itemize} 
  The process $$Y^{x,\pi,c}_t=\inf_{Q^\eta \in \mathcal{Q}}\E\Big[\bar{\alpha}U(X_T^{x,\pi,c})+\alpha\int_t^T e^{-\int_t^s \delta_udu} u(c_s)ds +\int_{t}^{T}e^{-\int_t^s \delta_udu} h\left(\eta_{s}\right) ds |\Fc_t\Big],$$
 satisfied the following quadratic BSDE:
\begin{equation}\label{bsdecaraterization}
\left\{\begin{array}{l}
d Y_{t}^{x,\pi,c}=\left(\delta_{t} Y_{t}^{x,\pi,c}-\alpha u(c_t)+\beta h^{*}
(\frac{1}{\beta}Z_{t})\right) dt-Z_{t} d W_{t}, \\
Y_{T}^{x,\pi,c}=\bar{\alpha} \bar{U}(X_T^{x,\pi,c}) .
\end{array}\right.
\end{equation}
Where $h^*$ the Legendre-Fenchel transform of $h$.
Thus, our problem is equivalent to find:
\begin{equation}\label{opt}
V_0(x)=\sup_{(\pi,c) \in \mathscr{A}_{e}}Y_0^{x,\pi,c}
\end{equation}
such that $Y$ satisfied (\ref{bsdecaraterization}).
To solve this problem, we use the notion of $g$-expectation introduced by Peng \cite{Peng97} .
We then start with the main results on this notion of nonlinear expectation.
\section{ $g$-expectation}
The $g$-expectations concept was firstly introduced by Peng in \cite{Peng97} in the case of lipschitz generator.  from which most basic material of this section is taken 
Let $\xi$ a strictly positive random variable such that $\E_\mathbb{P}[\exp(\gamma |\xi|)]<\infty$ forall $\gamma>0$ and $g:\mathbb{R}\times \mathbb{R}^d \rightarrow \mathbb{R}$ satisfies the following conditions:
\begin{itemize}
\item $\forall  t \in [0, T ], g(t,0)=0$
\item $\forall t \in [0, T ], z \mapsto g(t, z)$ is a continuous convex (or concave) function,
\item There are two positives constants $ \beta$ and $\gamma$ such that  $$ \forall (t,z)\in \R\times \R^d; |g(t,\ z)|\leq\beta|z|^{2}+\gamma .$$
\end{itemize}
According Briand and Hu \cite{BriHu08}, the BSDE
$$Y_{t}=\displaystyle \xi+\int_{t}^{T}g(s,\ Z_{s})ds-\int^{T}_tZ_{s}dW_{s},   
$$
admits   a  unique  solution  $(Y,\ Z)\in \mathscr{S}_T^{\infty}(\mathbb{R})\times \mathscr{H}_T^p\left(\mathbb{R}^m\right).$ 
\\
$Y_{t}$ is called conditional $g$-expectation of $\xi$ under $\mathcal{F}_t$ and and is noted 
$\mathcal{E}_{g}[\xi|\mathcal{F}_t]$ and $Y_{0}$ is called $g$ -expectation of $\xi$, denoted by $Y_{0}=\mathcal{E}_{g}(\xi)$.

\begin{Definition}
\begin{enumerate}
\item
A process $(X_{t})_t$ is called a $\mathrm{g}$-martingale, if for any $s \leq t$
$$\mathcal{E}_{g}[X_{t}|\mathcal{F}_s]=X_{s}$$
\item
A process $(X_{t})_t$ is called a $\mathrm{g}$-submartingale, if for any $s \leq t$
$$\mathcal{E}_{g}[X_{t}|\mathcal{F}_s]\geq X_{s}$$
\item
A process $(X_{t})_t$ is called a $\mathrm{g}$-supermartingale, if for any $s \leq t$
$$\mathcal{E}_{g}[X_{t}|\mathcal{F}_s]\leq X_{s}$$
\end{enumerate}
\end{Definition}
The following result is an immediate consequence of the comparison theorem
\begin{Proposition} \label{prop1}
If $g_1 \leq g_2$ then,
\begin{enumerate}
\item Any $\mathrm{g}_1$-submartingale is $\mathrm{g}_2$-submartingale.
\item Any $\mathrm{g}_2$-supermartingale is $\mathrm{g}_1$-supermartingale.
\end{enumerate}
\end{Proposition}
\begin{proof}
We prove the first item, the second in a similar way.
Let $(X_t)_t$ a $\mathrm{g}_1$-submartingale. For $0 \leq s \leq t$, we denote by $Y^{(1)}_{s,t}:=\mathcal{E}_{g_1}[X_{t}|\mathcal{F}_s]$ and $Y^{(2)}_{s,t}:=\mathcal{E}_{g_2}[X_{t}|\mathcal{F}_s].$ By definition, the processes $(Y^{(1)}_{s,t})_{0 \leq s \leq t}$ and $(Y^{(2)}_{s,t})_{0 \leq s \leq t}$ satisfies respectively the following BSDE:
$$Y^{(1)}_{s,t}=\displaystyle X_t+\int_{s}^{t}g_1(u,\ Z^{(1)}_{u})du-\int^{t}_sZ^{(1)}_{u}dW_{u},$$
and
$$Y^{(2)}_{s,t}=\displaystyle X_t+\int_{s}^{t}g_2(u,\ Z^{(2)}_{u})du-\int^{t}_sZ^{(2)}_{u}dW_{u}.$$
Using comparison theorem, we obtain $$Y^{(1)}_{s,t} \leq Y^{(2)}_{s,t}\;\; \mathbb{P}.a.s , \;\; \forall \;\;0 \leq s \leq t.$$
Therefore, $\mathcal{E}_{g_2}[X_{t}|\mathcal{F}_s] \geq X_s, \forall \;\;0 \leq s \leq t.$
\end{proof}
\begin{Lemma} Let $Y$ satisfying (\ref{bsdecaraterization}), then
$$Y_0^{x,\pi,c}=\Ec_g[e^{-\int_0^T \delta_u du}\bar{\alpha} \bar{U}(X_T^{x,\pi,c})+\int_0^T\alpha e^{-\int_0^s \delta_u du} u(c_s)ds],$$
where
$$g(w,t,z)=-\beta e^{-\int_0^t \delta_u du} h^{*}
\left(-\frac{1}{\beta}e^{\int_0^t \delta_u du}z\right).$$
\end{Lemma}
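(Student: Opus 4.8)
The plan is to reduce the linear, discounted BSDE \eqref{bsdecaraterization} characterizing $Y^{x,\pi,c}$ to the canonical $g$-expectation BSDE by a discounting change of variables, and then invoke uniqueness. Write $D_t = \exp(-\int_0^t \delta_u du)$ and introduce the transformed pair
\[
\tilde Y_t := D_t\, Y_t^{x,\pi,c} + \int_0^t \alpha\, D_s\, u(c_s)\, ds, \qquad \tilde Z_t := D_t\, Z_t .
\]
Since $D_0=1$ we have $\tilde Y_0 = Y_0^{x,\pi,c}$, while at $t=T$,
\[
\tilde Y_T = e^{-\int_0^T \delta_u du}\,\bar{\alpha}\,\bar{U}(X_T^{x,\pi,c}) + \int_0^T \alpha\, e^{-\int_0^s \delta_u du}\, u(c_s)\, ds =: \xi ,
\]
which is precisely the random variable whose $g$-expectation is claimed. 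It therefore suffices to show that $(\tilde Y,\tilde Z)$ solves the BSDE defining $\Ec_g[\xi\mid \mathcal{F}_t]$.

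First I would compute $d\tilde Y_t$ by integration by parts. Using $dD_t = -\delta_t D_t\,dt$ together with \eqref{bsdecaraterization}, the finite-variation part of $d(D_t Y_t)$ contains the terms $D_t\delta_t Y_t$ and $-\delta_t D_t Y_t$, which cancel, and a term $-\alpha D_t u(c_t)\,dt$ coming from the running reward; adding the differential $\alpha D_t u(c_t)\,dt$ of the explicit integral in $\tilde Y$ cancels the latter, leaving
\[
d\tilde Y_t = \beta\, D_t\, h^{*}\!\big(\tfrac1\beta Z_t\big)\,dt - D_t Z_t\, dW_t .
\]
Substituting $Z_t = e^{\int_0^t \delta_u du}\tilde Z_t$ into the $h^{*}$ term converts the driver into $\beta\, e^{-\int_0^t \delta_u du} h^{*}(\tfrac1\beta e^{\int_0^t \delta_u du}\tilde Z_t) = g(w,t,\tilde Z_t)$, so that $(\tilde Y,\tilde Z)$ satisfies a BSDE with terminal value $\xi$ and generator exactly $g$.

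It then remains to check that $g$ and $\xi$ meet the standing hypotheses of the $g$-expectation, so that uniqueness applies. From $h\ge 0$ and $h(0)=0$ one gets $h^{*}\ge 0$ and $h^{*}(0)=0$, hence $g(\cdot,0)=0$; convexity of $g$ in $z$ follows since $h^{*}$ is convex and is here composed with a linear map and multiplied by the positive factor $\beta e^{-\int_0^t \delta_u du}$; and the coercivity bound $h(x)\ge \kappa_1\|x\|^2 - \kappa_2$ yields the quadratic majorant $h^{*}(y)\le \|y\|^2/(4\kappa_1) + \kappa_2$, which together with the boundedness of $\delta$ (so that $e^{\pm\int_0^t \delta_u du}$ is bounded on $[0,T]$) gives a uniform estimate $|g(t,z)|\le \beta'\|z\|^2 + \gamma'$. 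The exponential integrability $\E_{\mathbb{P}}[\exp(\gamma|\xi|)]<\infty$ for all $\gamma>0$ follows from the standing assumptions on $U(X_T^{x,\pi,c})$ and $\int_0^T u(c_s)\,ds$ and the boundedness of the discount factors. Consequently $\xi$ and $g$ are admissible data, and uniqueness of the solution of the quadratic BSDE in $\mathscr{S}_T^{\infty}\times\mathscr{H}_T^p$ identifies $\tilde Y_t$ with $\Ec_g[\xi\mid\mathcal{F}_t]$; evaluating at $t=0$ gives $Y_0^{x,\pi,c} = \tilde Y_0 = \Ec_g(\xi)$.

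The Itô computation itself is routine; the point that needs care is the last step, namely confirming that the time-dependent exponential weights $e^{\pm\int_0^t \delta_u du}$ inserted into the argument of $h^{*}$ do not spoil the uniform quadratic-growth and $g(\cdot,0)=0$ conditions, and that the transformed pair $(\tilde Y,\tilde Z)$ remains in the solution spaces $\mathscr{S}_T^{\infty}\times\mathscr{H}_T^p$ so that the uniqueness theorem for $g$-expectations is legitimately invoked. Both are secured by the uniform boundedness of $\delta$.
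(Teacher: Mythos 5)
Your proof is correct and takes essentially the same approach as the paper: your transformed pair $\tilde Y_t = e^{-\int_0^t \delta_u du}Y_t^{x,\pi,c}+\int_0^t \alpha e^{-\int_0^s \delta_u du}u(c_s)ds$ is exactly the paper's process $L^{x,\pi,c}$, which the paper likewise shows solves the BSDE with generator $g$ and then identifies with the conditional $g$-expectation. Your version merely spells out what the paper leaves implicit — the It\^o computation, the rescaling $\tilde Z_t = e^{-\int_0^t\delta_u du}Z_t$, the admissibility of $(g,\xi)$ (which the paper defers to the Remark following the Lemma, where incidentally your bound $h^*(y)\le \|y\|^2/(4\kappa_1)+\kappa_2$ is the correct one), and the appeal to uniqueness.
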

\begin{proof}
Consider the stochastic process $L^{x,\pi,c}$ defined by:
$$L^{x,\pi,c}_t:= e^{-\int_0^t \delta_u du}Y^{x,\pi,c}_t+\int_0^t\alpha e^{-\int_0^s \delta_u du} u(c_s)ds$$
The process $L^{x,\pi,c}$ satisfy the following BSDE:
\begin{equation}\label{gcaracterization}
\left\{\begin{array}{l}
d L_{t}^{x,\pi,c}=\beta e^{-\int_0^t \delta_u du} h^{*}
\left(-\frac{1}{\beta}e^{\int_0^t \delta_u du}Z_{t}\right) dt+Z_{t} d W_{t}, \\
L_{T}^{x,\pi,c}=e^{-\int_0^T \delta_u du}\bar{\alpha} \bar{U}(X_T^{x,\pi,c})+\int_0^T\alpha e^{-\int_0^s \delta_u du} u(c_s)ds .
\end{array}\right.
\end{equation}
In other words
$$L_{t}^{x,\pi,c}=\Ec_g\Big[ e^{-\int_0^T \delta_u du}\bar{\alpha} \bar{U}(X_T^{x,\pi,c})+\int_0^T\alpha e^{-\int_0^s \delta_u du} u(c_s)ds |\Fc_t\Big].$$
\end{proof}
\begin{Remark}
The conditions mentioned above are verified by the BSDE(\ref{gcaracterization}). Indeed:
\begin{enumerate}
\item Since $h(0)=0$ and $h(x)\geq 0$ for all $x\in \R,$ we have $h^*(0)=\sup\limits_{y \leq 0}-h(y)=0.$
\item $h$ is supposed to be continuous and convex function so $h^*$ is also.
\item  $ \forall x \in \R^d; 
|h(x)| \geq \kappa_{1}\|x\|^{2}-\kappa_{2},
$
then 
$ \forall x \in \R^d; 
|h^*(x)| \leq \dfrac{1}{2\kappa_{1}}\|x\|^{2}-\kappa_{2}.
$

\end{enumerate}
\end{Remark}
Thus our problem reduces to a utility maximization problem under the nonlinear expectation.
In the sequel we are interested in the second problem, the supermum, when the utility function is the logarithmic utility. We also assume that the discount factor $\delta$ is deterministic. We first specify the structure of the financial market as well as the set of admissible strategies.

\section{Market Model}
The financial market consists of one bond without interest rate  and $m \leq d$ stocks. In case $m<d$, we face an incomplete market. The price process
of stock $i$ evolves according to the equation:
$$
\begin{aligned}
&\frac{d S_{t}^{i}}{S_{t}^{i}}=b_{t}^{i} d t+\sigma_{t}^{i} d W_{t}, \quad i=1, \ldots, m
\end{aligned}
$$
where $b^{i}$ (resp. $\sigma^{i}=(\sigma^{i}_1, \ldots , \sigma^{i}_d)$ ) is an $\mathbb{R}$-valued (resp. $\mathbb{R}^{1 \times d}$ -valued) predictable uniformly bounded stochastic process. The volatility matrix $\sigma$ is the the $m \times d$- matrix whose the $i^{th}$ line  is  given by the vector $\sigma^{i}$ for i ranging from $1$ to $m$. We assume that:
\begin{itemize}
\item $\sigma=\left(\sigma^{i}_j\right)_{i=1, \ldots, d, j=1, \ldots, m}$ has full rank.
\item The matrix $\sigma \sigma^{t r}$ is uniformly elliptic i.e., there are two constants  $0 < \varepsilon< K ,$ such that   $\varepsilon I_{m} \leq \sigma \sigma^{t r} \leq  K I_{m}$ $\P$-a.s. 
\item The predictable $\mathbb{R}^{m}$-valued process
$$
\theta_{t}=\sigma_{t}^{t r}\left(\sigma_{t} \sigma_{t}^{t r}\right)^{-1} b_{t}, \quad t \in[0, T]
$$
is then also uniformly bounded. 
\end{itemize}
An economic agent invests in the financial market can consume part of his wealth at intermediate times. Let  $\pi^i_t, 1 \leq i \leq m,$ the proportion of investor's wealth invested in the $i^{th}$ risky stock $S^i$ and $c_t$ the consumed proportion  rate  at time $t$. 
We assume that $\left(\pi, c\right)$ belong to space 
$$
\mathscr{H}_T:=  \left\{\left(\pi, c \right)\in \Pc^{1\times m} \times \Pc \;\;
 \text {such that}\;\; c_t>0\;dt \times d\P .a.e \;\;\text {and}\;\; \dint_0^T c_t dt<\infty, \dint_0^T\left|\pi_t\right|^2 dt<\infty, \mathbb{P} \text {-a.s }\right\}.
$$
 By the self-financing condition, the investor's wealth $X_t^{x,\pi,c}$ at time $t$ , starting from the positive initial capital $x$, satisfies 
$$X_t^{x,\pi,c}=x+\sum\limits_{i=1}^m \int_0^t \pi_s^i\frac{dS^i_s}{S^i_s}-\int_0^t c_sX_s^{x,\pi,c} ds,$$
 
or, equivalently
$$
\dfrac{dX_t^{x,\pi,c}}{X_t^{x,\pi,c}}=\pi_t\sigma_t(dW_t+\theta_tdt)-c_tdt\;\;; X_0^{\pi,c}=x.
$$
The investor's wealth process can be write
\begin{equation}\label{eq0}
X_t^{x,\pi,c}=x\Ec(\pi\sigma\textbf{.} W^{\Q})_t \exp(-\int_0^tc_sds)>0,
\end{equation}
where $\Ec$ is the stochastic exponential and $W^{\Q}_t= W_t +\int_0^t \theta_sds.$
\\
Our goal is to select consumption and investment controls which maximize the finite horizon robust expected Logarithmic utility of consumption and terminal wealth.
To formulate consumption and investment constraints, let   $\Ac \times \Cc \in \Pc^{1\times m} \times \Pc $ consist of all pair $(\pi,c)\in \mathscr{H}_T$ such that:
$$\forall \gamma>0; \E[e^{\gamma (|\ln(X_T^{x,\pi,c})|}]< \infty \;\; \text{and} \;\; \E[e^{\int_0^T |\ln(c_sX_s^{x,\pi,c})|ds}]< \infty . $$

We also recall the following definitions relating to conditional analysis. For more details see cheridito et all \cite{cheridito10, cheridito11}. 
A subset $A$ of $\mathcal{P}^{1 \times k}$  is sequentially closed if it contains every process $a$ that is the $\nu \otimes \mathbb{P}$-a.e. limit of a sequence $\left(a^n\right)_{n \geq 1}$ of processes in $A$. We call it $\mathcal{P}$-stable if it contains $1_B a+1_{B^c} a^{\prime}$ for all $a, a^{\prime} \in A$ and every predictable set $B \subset[0, T] \times \Omega$. We say $A$ is $\mathcal{P}$-convex if it contains $\lambda a+(1-\lambda) a^{\prime}$ for all $a, a^{\prime} \in A$ and every process $\lambda \in \mathcal{P}$ with values in $[0,1]$. In the whole paper we work with the following
\begin{Definition}
The set $\Acr_e=\Acr\times \Ccr$ of   admissible strategy consists of   non empty sequentially closed,  $\Pc$-stable and $L^0-$ bounded subset  on   $\Ac \times \Cc$. 
\end{Definition}

\section{Optimal consumption–investment strategy}
To solve this problem, we adopt the same approach as Imkeller and Hu \cite{IH2005} and Jiang \cite{Jiang16}. 
Let us first start with the following verification lemma.
\begin{Lemma}\label{lemma1}
For any given  $(\pi,c) \in  \mathscr{A}_{e}$ , if there exists a RCLL adapted process $R^{x,\pi,c}$  such that 
 \begin{itemize}
 \item $\forall (\pi,c) \in  \Acr\times \Ccr  ; R_{T}^{x,\pi,c}:=\bar{\alpha}e^{-\int_0^T \delta_u du }\ln( X_{T}^{x,\pi,c})+\dint_0^T \alpha e^{-\int_0^u \delta_s ds } \ln (c_u X_{u}^{x,\pi,c})du,$ 
 \item $\forall (\pi,c) \in  \Acr\times \Ccr  ; R^{x,\pi,c}$ is a $ g$-supermartingale,
 \item  There exists $(\pi^{*},c^{*}) \in \Acr\times \Ccr$ such that $R^{x, \pi^{*},c^*}$ is a $g$-martingale.
 \\Then $(\pi^{*},c^{*})$ is an optimal strategy to Problem (\ref{opt}).
 \end{itemize}
\end{Lemma}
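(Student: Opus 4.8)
The plan is to run the martingale optimality principle in its nonlinear, \(g\)-expectation form. The first thing I would record is that the terminal datum in the first bullet is precisely the random variable whose \(g\)-expectation produces the value process: specializing the earlier Lemma (the one writing \(Y_0^{x,\pi,c}\) as a \(g\)-expectation) to \(U=u=\ln\) and reading the consumption argument as the realized consumption \(c_uX_u^{x,\pi,c}\), one gets
\[
Y_0^{x,\pi,c}=\Ec_g\big[R_T^{x,\pi,c}\big]=\Ec_g\big[R_T^{x,\pi,c}\mid\Fc_0\big],
\]
the second equality because \(\Fc_0\) is trivial. Thus maximizing \(Y_0^{x,\pi,c}\) over \(\Acr\times\Ccr\) is the same as maximizing \(\Ec_g[R_T^{x,\pi,c}\mid\Fc_0]\).

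Next I would evaluate the two monotonicity hypotheses at the initial time. Applying the \(g\)-supermartingale property between \(0\) and \(T\) gives \(\Ec_g[R_T^{x,\pi,c}\mid\Fc_0]\le R_0^{x,\pi,c}\), hence \(Y_0^{x,\pi,c}\le R_0^{x,\pi,c}\) for every admissible \((\pi,c)\). For the distinguished pair \((\pi^*,c^*)\) the \(g\)-martingale property turns this into an equality, \(Y_0^{\pi^*,c^*}=R_0^{\pi^*,c^*}\).

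The decisive structural input is that the initial value \(R_0^{x,\pi,c}\) is the same constant for all strategies. Reading the defining expression at \(t=0\), the running integral vanishes and \(X_0^{x,\pi,c}=x\), so \(R_0^{x,\pi,c}=\bar\alpha\ln(x)\) (and any additional continuation component contributes only a deterministic constant). Feeding this into the previous step yields, for all \((\pi,c)\in\Acr\times\Ccr\),
\[
Y_0^{x,\pi,c}\le R_0^{x,\pi,c}=\bar\alpha\ln(x)=R_0^{\pi^*,c^*}=Y_0^{\pi^*,c^*},
\]
so \(Y_0^{\pi^*,c^*}=\sup_{(\pi,c)}Y_0^{x,\pi,c}=V_0(x)\), which is exactly the assertion that \((\pi^*,c^*)\) solves Problem (\ref{opt}).

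The inequalities themselves are immediate; the work is in the two structural facts that feed them. The first is that \(\Ec_g[R_T^{x,\pi,c}]\) really equals \(Y_0^{x,\pi,c}\), which needs \(R_T^{x,\pi,c}\) to inherit the exponential integrability encoded in the admissibility condition so that the quadratic BSDE defining \(\Ec_g\) is solvable in \(\mathscr{S}_T^{\infty}\times\mathscr{H}_T^p\) and the earlier characterization is applicable; I would check this integrability first, using boundedness of \(\pi\), \(\theta\) and the admissibility of \((\pi,c)\). The second, and the genuine crux, is the strategy-independence of \(R_0^{x,\pi,c}\): it is exactly this common initial value that converts the one-sided supermartingale estimate into an upper bound valid uniformly over all admissible strategies, and it is the one place where the principle would fail if the family \(\{R^{x,\pi,c}\}\) were not constructed to start from a common point.
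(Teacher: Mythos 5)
Your argument is correct in substance, but it takes a genuinely different route from the paper's. You run the martingale optimality principle \emph{statically}: the identification $Y_0^{x,\pi,c}=\Ec_g\big[R_T^{x,\pi,c}\big]$ (via the lemma of Section 3, with $U=u=\ln$ and the consumption read as $c_uX_u^{x,\pi,c}$), the $g$-supermartingale property applied once between $0$ and $T$, the $g$-martingale property for $(\pi^*,c^*)$, and finally the fact that all the processes start from one common constant. The paper instead runs a \emph{dynamic} concatenation (pasting) argument: for each $t$ it pastes an arbitrary continuation $(\hat\pi,\hat c)$ onto $(\pi,c)$ at time $t$, applies the $g$-supermartingale property to the pasted strategy, and concludes that $R_t^{x,\pi,c}$ dominates the conditional value of every continuation, with equality for $(\pi^*,c^*)$; taking $t=0$ gives optimality. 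The paper's version proves more --- it identifies $R^{x,\pi^*,c^*}$ with the conditional value process at every $t$, i.e.\ a dynamic programming statement --- while yours is shorter and isolates exactly what optimality at time zero requires. One caveat, which applies to both proofs but which you should state more carefully: the lemma's hypotheses pin down only $R_T^{x,\pi,c}$ and the (super)martingale properties; they say nothing about $R_0$. Your assertion that $R_0^{x,\pi,c}=\bar\alpha\ln(x)$ up to a strategy-independent constant cannot be ``read off'' from the first bullet, since that bullet defines the process only at the terminal time; it is an additional structural hypothesis on the family, satisfied by the construction given after the lemma, where $R_t^{x,\pi,c}=\bar{\alpha}h(t)e^{-\int_0^t \delta_u du}\big(\ln(X_t^{x,\pi,c})-Y_t\big)+\int_0^t \alpha e^{-\int_0^u \delta_s ds}\ln(c_uX_u^{x,\pi,c})du$, so that $R_0^{x,\pi,c}=\bar{\alpha}h(0)\big(\ln(x)-Y_0\big)$ is indeed the same constant for every $(\pi,c)$. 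The paper's proof leans on the analogous unstated property, namely $R_t^{x,\tilde\pi,\tilde c}=R_t^{x,\pi,c}$ whenever the strategies agree on $[0,t]$ (consistency under pasting), which implies your common-initial-value property at $t=0$. You correctly flagged the common starting point as the crux; the only improvement needed is to present it explicitly as a hypothesis on the family $\{R^{x,\pi,c}\}$ (or as a verified property of the later construction) rather than as a consequence of the lemma's statement.
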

 \begin{proof}
  For given $t \in[0, T]$ and $(\hat{\pi},\hat{c}) \in \mathscr{A}_{e}$, we introduce $\tilde{\pi}_{u}=\pi_{u} I_{u \leq t}+\hat{\pi}_{u} I_{u>t}$, and $\tilde{c}_{u}=c_{u} I_{u \leq t}+\hat{c}_{u} I_{u>t}$ one can see that ($\tilde{\pi},\tilde{c}) \in \mathscr{A}_{e}$ and $X_{t}^{x,\tilde{\pi},\tilde{c}}=X_{t}^{x,\pi,c}$. Since $R^{x, \tilde{\pi},\tilde{c}}$ is a $g$-supermartingale, we have
 \begin{align*}
 R_{t}^{x,\pi,c} & \geq \Ec_g[\bar{\alpha}e^{-\int_0^T \delta_u du }\Big(\ln( X_{T}^{x,\tilde{\pi},\tilde{c}})\Big)+\dint_0^T \alpha e^{-\int_0^u \delta_s ds } \ln (\tilde{c}_u X_{u}^{x,\tilde{\pi},\tilde{c}})du|\Fc_t]
 \\&=\Ec_g\Big[\bar{\alpha}e^{-\int_0^T \delta_u du }\Big(\ln( X_{t}^{x,\pi,c}\dfrac{X_{T}^{x,\hat{\pi},\hat{c}}}{X_{t}^{x,\hat{\pi},\hat{c}}})\Big)+\dint_0^t \alpha e^{-\int_0^u \delta_s ds } \ln (c_u X_{u}^{x,\pi,c})du \\&
 +\dint_t^T \alpha e^{-\int_0^u \delta_s ds } \ln (\hat{c}_u  X_{t}^{x,\pi,c}\dfrac{X_{u}^{x,\hat{\pi},\hat{c}}}{X_{t}^{x,\hat{\pi},\hat{c}}})du|\Fc_t\Big].
 \end{align*}
 Thus,
 \begin{align*}
 R_{t}^{x,\pi,c} & \geq \sup\limits_{(\hat{\pi},\hat{c})}\Ec_g\Big[\bar{\alpha}e^{-\int_0^T \delta_u du }\ln( X_{t}^{x,\pi,c}\dfrac{X_{T}^{x,\hat{\pi},\hat{c}}}{X_{t}^{x,\hat{\pi},\hat{c}}})+\dint_0^t \alpha e^{-\int_0^u \delta_s ds } \ln (c_u X_{u}^{x,\pi,c})du \\&
 +\dint_t^T \alpha e^{-\int_0^u \delta_s ds } \ln (\hat{c}_u  X_{t}^{x,\pi,c}\dfrac{X_{u}^{x,\hat{\pi},\hat{c}}}{X_{t}^{x,\hat{\pi},\hat{c}}})du|\Fc_t\Big].
 \end{align*}
 Furthermore, as for $(\pi^*,c^*); $ $R_{t}^{ \pi^{*}}$ is a $g$ -martingale,
\begin{align*}
 R_{t}^{x,\pi^*,c^*} & = \Ec_g\Big[\bar{\alpha}e^{-\int_0^T \delta_u du }\ln( X_{t}^{x,\pi^*,c^*}\dfrac{X_{T}^{x,\pi^*,c^*}}{X_{t}^{x,\pi^*,c^*}})+\dint_0^t \alpha e^{-\int_0^u \delta_s ds } \ln (c_u X_{u}^{x,\pi^*,c^*})du \\&
 +\dint_t^T \alpha e^{-\int_0^u \delta_s ds } \ln (c^*_u  X_{t}^{x,\pi^*,c^*}\dfrac{X_{u}^{x,\pi^*,c^*}}{X_{t}^{x,\pi^*,c^*}})du|\Fc_t\Big]
 \\ & \leq
  \sup\limits_{(\hat{\pi},\hat{c})}\Ec_g\Big[\bar{\alpha}e^{-\int_0^T \delta_u du }\ln( X_{t}^{x,\pi^*,c^*}\dfrac{X_{T}^{x,\hat{\pi},\hat{c}}}{X_{t}^{x,\hat{\pi},\hat{c}}})+\dint_0^t \alpha e^{-\int_0^u \delta_s ds } \ln (c_u X_{u}^{x,\pi^*,c^*})du \\&
 +\dint_t^T \alpha e^{-\int_0^u \delta_s ds } \ln (\hat{c}_u  X_{t}^{x,\pi^*,c^*}\dfrac{X_{u}^{x,\hat{\pi},\hat{c}}}{X_{t}^{x,\hat{\pi},\hat{c}}})du|\Fc_t\Big].
 \end{align*}
 Thus
 \begin{align*}
 R_{t}^{x,\pi^*,c^*} & = 
  \sup\limits_{(\hat{\pi},\hat{c})}\Ec_g\Big[\bar{\alpha}e^{-\int_0^T \delta_u du }\ln( X_{t}^{x,\pi^*,c^*}\dfrac{X_{T}^{x,\hat{\pi},\hat{c}}}{X_{t}^{x,\hat{\pi},\hat{c}}})+\dint_0^t \alpha e^{-\int_0^u \delta_s ds } \ln (c_u X_{u}^{x,\pi^*,c^*})du \\&
 +\dint_t^T \alpha e^{-\int_0^u \delta_s ds } \ln (\hat{c}_u  X_{t}^{x,\pi^*,c^*}\dfrac{X_{u}^{x,\hat{\pi},\hat{c}}}{X_{t}^{x,\hat{\pi},\hat{c}}})du|\Fc_t\Big],
 \end{align*}
 holds for each $t \in [0, T]$, and by taking $t = 0$, we know that $(\pi^*,c^*)$
 is an optimal strategy.
 \end{proof}
 In the sequel, we assume that the process $\delta$ is deterministic. 
 \begin{Theorem}\label{Th1}
The optimal value of the problem (\ref{opt}) is given by
$$V_0(x)=(\bar{\alpha} e^{-\int_0^T \delta_u du}+ \alpha \int_0^T e^{-\int_0^s \delta_u du}ds)\ln(x)+Y_0,$$
where $(Y,Z)$ is the solution of the following BSDE
\begin{equation}\label{BSDE}
dY_t= -f(t,Z_t)dt+Z_t dW_t\;\;;\; Y_T=0,
\end{equation}
with
\begin{equation}\label{eq4}
\begin{split}
 f(t,z) = \esssup\limits_{\pi \in  \Acr }(g(t,z+\ell_t\pi_t \sigma_{t})-\ell_t \pi_t \sigma_{t}\theta_{t}) + \esssup\limits_{c \in  \Ccr }(\alpha e^{-\int_0^t \delta_s ds } \ln (c_t)-\ell_tc_t)
\;\; \P.a.s.
 \end{split}
\end{equation}
and 
\begin{equation}\label{ell}
\ell_t = \bar{\alpha} e^{-\int_0^T \delta_u du}+ \alpha \int_t^T e^{-\int_0^s \delta_u du}ds.
\end{equation}
Morover $(\pi^*,c^*)$  is an optimal admissible strategy if and only if
$$c^* \in \argmax\limits_{c\in \Ccr} \Big(\alpha e^{-\int_0^. \delta_s ds } \ln (c)-\ell c\Big) \;\;\; \text{and}  \;\;\;\pi^* \in \argmax\limits_{\pi\in \Acr}[g(.,Z+\ell\pi \sigma)-\ell \pi \sigma\theta].$$
\end{Theorem}
\begin{proof}
We first start by justifying that the expressions involved in the Theorem \ref{Th1} are well defined.
The results of conditional analysis explored in \cite{cheridito11} ensures the existence of the infimum and the supermum in (\ref{eq4}) and consequently the function $f$ is well defined  and also that the sets $\argmin\limits_{c\in \Cc} \Big(\alpha e^{-\int_0^. \delta_s ds } \ln (c)-\ell c\Big) $ et $\argmin\limits_{\pi\in \Ac}[g(.,Z+\ell\pi \sigma)-\ell \pi \sigma\theta]$ are non-empty. On the other hand the hypotheses on $h$ lead to the fact that $f$ is convex with quadratic variation in $z$ which ensures the existence and uniqueness of the solution of \ref{BSDE}.
 Moreover the conditions on $g$ implies that the function $f$ is quadratic in $z$ variable  which guarantees the existence  of solution for the BSDE (\ref{BSDE}).
 \\
Let's prove now the Theorem \ref{Th1}. The idea consists of constructing a family of processes $R^{x,\pi,c}$ which satisfies the conditions of the lemma \ref{lemma1}.
. We seek $R^{x,\pi,c}$ in the form
$$R_{t}^{x,\pi,c}:=\ell_t\ln( X_{t}^{x,\pi,c})+Y_t+\dint_0^t \alpha e^{-\int_0^u \delta_s ds } \ln (c_u X_{u}^{x,\pi,c})du$$ 
Where the process $Y$  satisfies the  quadratic BSDE (\ref{BSDE}) 
and the function $\ell$ is given by \ref{ell}. Note that the function $\ell$ is the unique  solution of the ordinary Cauchy problem 
\begin{equation}\label{cauchypb}
\forall 0 \leq t \leq T; \dfrac{d\ell_t}{dt} =-\alpha e^{-\int_0^t \delta_u du} \;\; \text{and} \;\; \ell_T=\bar{\alpha} e^{-\int_0^T \delta_u du}.
\end{equation}
Using It\^o formula, we have
\begin{align*}
dR_{t}^{x,\pi,c}&=-\alpha e^{-\int_0^t \delta_u du}\ln( X_{t}^{x,\pi,c}) dt +\ell_t (\pi_t \sigma_{t}dW_{t}-\pi_t \sigma_{t}\theta_{t}dt - c_tdt)- f(t,Z_t)dt+Z_tdW_t
\\& 
+\alpha e^{-\int_0^u \delta_s ds } \ln (c_t) dt +\alpha e^{-\int_0^u \delta_s ds } \ln (X_{t}^{x,\pi,c}) dt
\\&=  (-f(t,Z_t)-\ell_t \pi_t \sigma_{t}\theta_{t} - \ell_tc_t + \alpha e^{-\int_0^u \delta_s ds } \ln (c_t))dt + (\ell_t\pi_t \sigma_{t}+Z_t)dW_t
\\&= (-f(t,Z^{x,\pi,c}_t-\ell_t\pi_t \sigma_{t})-\ell_t \pi_t \sigma_{t}\theta_{t} - \ell_tc_t + \alpha e^{-\int_0^t \delta_s ds } \ln (c_t))dt + Z^{x,\pi,c}_tdW_t
\end{align*}
Where $Z_t^{x,\pi,c}:=\ell_t\pi_t \sigma_{t}+Z_t;\; 0 \leq t \leq T$. 
Therefore $R^{x,\pi,c}$ is solution of the BSDE
$$
\left\{\begin{array}{l}
d R_{t}^{x,\pi,c}=-F^{\pi,c}(t,Z^{x,\pi,c}_t)dt+Z^{x,\pi,c}_tdW_{t}, 
 \\
R_{T}^{x,\pi,c}:=\bar{\alpha}e^{-\int_0^T \delta_u du }\ln( X_{T}^{x,\pi,c})+\dint_0^T \alpha e^{-\int_0^u \delta_s ds } \ln (c_u X_{u}^{x,\pi,c})du.
\end{array}\right.
$$
Where
$$F^{\pi,c}(t,z)=f(t,z-\ell_t\pi_t \sigma_{t})+\ell_t \pi_t \sigma_{t}\theta_{t} + \ell_tc_t - \alpha e^{-\int_0^t \delta_s ds } \ln (c_t).$$
by definition of $f$ we have for all $t\in [0,+\infty[,z\in \R^d$ and for all $(\pi,c)\in \Acr\times \Ccr$
$$ f(t,z) \geq g(t,z+\ell_t\pi_t \sigma_{t})-\ell_t \pi_t \sigma_{t}\theta_{t}+ \alpha e^{-\int_0^t \delta_s ds } \ln (c_t)- \ell_tc_t ;\;\; P.a.s$$
which implies that for all $t\in [0,+\infty[,z\in \R^d$ and for all $(\pi,c)\in \Acr\times \Ccr$
$$F(t,z) \geq g(t,z); \;\;P.a.s.$$
And therefore, based on Proposition \ref{prop1}, $R^{x,\pi,c}$ is $g-$ supermartingale for all $(\pi,c)\in \Acr \times \Ccr.$ Moreover, let $(Y,Z)$ the unique solution of BSDE (\ref{BSDE}),  $c^* \in \argmin\limits_{c\in \Cc} \Big(\alpha e^{-\int_0^. \delta_s ds } \ln (c)-\ell c\Big)$ and  $\pi^* \in \argmin\limits_{\pi\in \Ac}[g(.,Z+\ell\pi \sigma)-\ell \pi \sigma\theta].$ Then , we have
$$
d R_{t}^{x,\pi^*,c^*}=-g(t,Z^{x,\pi^*,c^*}_t)dt+Z^{x,\pi^*,c^*}_tdW_{t}, 
 $$
 and so the process $ R^{x,\pi^*,c^*}$ is a $g-$ martingale. Which completes the proof of the theorem.
\end{proof}

\section{Example}
In this section we suppose that $h(x)=\dfrac{1}{2}\|x\|^2; \forall x \in \R^d$ which matches the entropic utility case. Then, we have $  h^*(x)=\dfrac{1}{2}\|x\|^2$ and therefore
$g(w,t,z)=-\dfrac{1}{2\beta }e^{\int_0^t \delta_u du} \|z\|^2$.
We also assume that the process $(\dfrac{\alpha e^{-\int_0^t\delta_udu}}{\ell_t})_{t\in[0,T]} \in \Ccr.$
\\For a process $q$ in $\mathcal{P}^{1 \times n}$, we denote by $\operatorname{dist}(q, P)$ the predictable process
$$
\operatorname{dist}(q, P):=\underset{p \in P}{\operatorname{essinf}}\|q-p\|,
$$
where ess inf denotes the greatest lower bound with respect to the $\nu \otimes \mathbb{P}$-a.e. order. It is shown in \cite{cheridito11} that there exists a process $p \in P$ satisfying $|q-p|=\operatorname{dist}(q, P)$ and that it is unique (up to $\nu \otimes \mathbb{P}$-a.e. equality) if $P$ is $\mathcal{P}$-convex. We denote the set of these processes by $\Pi_P(q)$.
\begin{Proposition}
The generator (\ref{eq4}) is given by
\begin{equation*}
 f(t,z) =
\alpha e^{-\int_0^t \delta_u du }\ln(\dfrac{\alpha e^{-\int_0^t\delta_udu}}{e\ell_t}) -\dfrac{1}{2\beta }e^{\int_0^t \delta_u du} \operatorname{dist}^2(z+\beta e^{-\int_0^t \delta_u du}\theta_t,\ell_t\sigma_t \Acr)+z\theta_t+\dfrac{\beta}{2}e^{-\int_0^t \delta_u du}\|\theta_t\|^2,
\end{equation*}
where the set $\ell\sigma \Acr:=\{\ell\sigma \pi ; \pi \in\Acr\}.$
\\The optimal strategy $(\pi^*, c^*)$ verifies:
$c^*_t=\dfrac{\alpha e^{-\int_0^t\delta_udu}}{\ell_t}$  and $\pi^* \in \dfrac{1}{\ell \sigma} \Pi_{\ell \sigma \Acr}(Z+\beta e^{-\int_0^. \delta_u du}\theta)$
\end{Proposition}
\begin{proof}
By studying the function $x \mapsto \alpha e^{-\int_0^t \delta_s ds } \ln (x)-\ell_t x $, it is clear that it reaches its maximum in $\dfrac{\alpha e^{-\int_0^t\delta_sds}}{\ell_t}$ and we have
$$\esssup\limits_{c \in \Ccr} \alpha e^{-\int_0^t \delta_s ds } \ln (c_t)-\ell_t c_t=\alpha e^{-\int_0^t \delta_s ds } ( \ln (\dfrac{\alpha e^{-\int_0^t\delta_udu}}{\ell_t})- 1)=\alpha e^{-\int_0^t \delta_s ds }  \ln (\dfrac{\alpha e^{-\int_0^t\delta_udu}}{e\ell_t}) $$
\begin{equation*}
\begin{split}
 &g \left(t, z+ \ell_t\pi_t\sigma_t)\right)-\ell_t\pi_t \sigma_{t} \theta_{t}
 \\&
 =-\dfrac{1}{2\beta }e^{\int_0^t \delta_u du} \|z+ \ell_t\pi_t\sigma_t\|^2-\ell_t\pi_t \sigma_{t} \theta_{t}
 \\&
 =-\dfrac{1}{2\beta }e^{\int_0^t \delta_u du} \|z+ \ell_t\pi_t\sigma_t\|^2-\ell_t\pi_t \sigma_{t} \theta_{t}
 \\&
 =-\dfrac{1}{2\beta }e^{\int_0^t \delta_u du} \|z+\beta e^{-\int_0^t \delta_u du}\theta_t + \ell_t\pi_t\sigma_t\|^2 +z\theta_t+\dfrac{\beta}{2}e^{-\int_0^t \delta_u du}\|\theta_t\|^2,
 \end{split}
\end{equation*}
and so, 
\begin{equation*}
\begin{split}
 & \esssup\limits_{\pi \in  \Acr } g \left(t, z+ \ell_t\pi_t\sigma_t)\right)-\ell_t\pi_t \sigma_{t} \theta_{t}
 \\&
  =-\dfrac{1}{2\beta }e^{\int_0^t \delta_u du} \essinf\limits_{\pi \in  \Acr } \|z+\beta e^{-\int_0^t \delta_u du}\theta_t + \ell_t\pi_t\sigma_t\|^2 +z\theta_t+\dfrac{\beta}{2}e^{-\int_0^t \delta_u du}\|\theta_t\|^2
  \\&
  =-\dfrac{1}{2\beta }e^{\int_0^t \delta_u du} \operatorname{dist}^2(z+\beta e^{-\int_0^t \delta_u du}\theta_t,\ell_t\sigma_t \Acr) +z\theta_t+\dfrac{\beta}{2}e^{-\int_0^t \delta_u du}\|\theta_t\|^2.
 \end{split}
\end{equation*}
\end{proof}
 \section{Conclusion}
 In this paper, we studied the robust utility maximization problem in the logarithmic utility framework and in an incomplete market with general constraint. Using nonlinear martingale we characterized the optimal strategies using quadratic BSDE. Although our study is limited to the case of logarithmic utility, it generalizes the work of Cheridito et al. \cite{cheridito10} in the  case of robust utility maximization and and Jiang et al. \cite{Jiang16}.  in the case of  model with consumption and general constraints.

\end{document}